\definecolor{mycyan}{cmyk}{.3,0,0,0}
\newtheorem{thm}{Theorem}
\newtheorem{cor}{Corollary}
\newtheorem{defn}{Definition}
\newtheorem{rem}{Remark}
\begin{document}
\title{Differential Message Importance Measure: A New Approach to the Required Sampling Number in Big Data Structure Characterization}

\author{Shanyun~Liu, Rui~She, Pingyi~Fan,~\IEEEmembership{Senior Member, ~IEEE}\\


\thanks{
Shanyun~Liu, Rui~She and Pingyi Fan are with Tsinghua National Laboratory for Information Science and Technology(TNList) and the Department of Electronic Engineering,  Tsinghua University,  Beijing,  P.R. China, 100084. e-mail: \{liushany16,~sher15@mails.tsinghua.edu.cn, fpy@tsinghua.edu.cn.\} }
\thanks{This work was supported in part by the National Natural Science Foundation of China (NSFC) under Grant 61771283 and 61621091, and in part by the China Major State Basic Research Development Program (973 Program) under Grant 2012CB316100(2).}

}
\maketitle

\graphicspath{{Figures/}}

\begin{abstract}
Data collection is a fundamental problem in the scenario of big data, where the size of sampling sets plays a very important role, especially in the characterization of data structure. This paper considers the information collection process by taking message importance into account, and gives a distribution-free criterion to determine how many samples are required in big data structure characterization. Similar to differential entropy, we define differential message importance measure (DMIM) as a measure of message importance for continuous random variable. The DMIM for many common densities is discussed, and high-precision approximate values for normal distribution are given. Moreover, it is proved that the change of DMIM can describe the gap between the distribution of a set of sample values and a theoretical distribution. In fact, the deviation of DMIM is equivalent to Kolmogorov-Smirnov statistic, but it offers a new way to characterize the distribution goodness-of-fit. Numerical results show some basic properties of DMIM and the accuracy of the proposed approximate values. Furthermore, it is also obtained that the empirical distribution approaches the real distribution with decreasing of the DMIM deviation, which contributes to the selection of suitable sampling points in actual system.
\end{abstract}

\begin{IEEEkeywords}
Differential Message importance measure, Big Data, Kolmogorov-Smirnov test, Goodness of fit, distribution-free.
\end{IEEEkeywords}

\IEEEpeerreviewmaketitle

\section{Introduction}\label{Sec:Introduction}
The actual system of big data needs to process lots of data within a limited time generally, so many researches are on sample data to improve their efficiency \cite{chen2014big,tahmassebpour2017new}. In fact, sampling technology is intensely effective for solving the challenges in big data, such as intrusion detection \cite{meng2017enhancing} and privacy-preserving approximate search \cite{zhou2017hermes}. One basic problem that can occur with sampling is that how many samples is required to have a good characterization of the big data structure, e.g. fitting the real distribution. Too many samples means wasting of resources, while too little samples is along with great bias. Distribution goodness-of-fit is generally used to describe this problem, which focuses on the error magnitude between the distribution of a set of sample values and the real distribution, and it plays a fundamental role in signal processing and information theory. This paper desires to solve this problem based on information theory.

Shannon entropy \cite{shannon2001mathematical} is possibly the most important quantity in information theory, which describes the fundamental laws of data compression and communication \cite{verdu1998fifty}. Due to its success, numerous entropies have been provided in order to extend information theory. Among them, the most successful expansion is R{\'e}nyi entropy \cite{renyi1961measures}. There are many applications based on R{\'e}nyi entropy, such as hypothesis testing \cite{morales2000renyi,van2014renyi}.

Actually, entropy is a quantity with respect to probability distribution, which satisfies the intuitive notion of what a measure of information should be \cite{Elements}. Generally, the events are naturally endowed with importance label and the process of fitting is equivalent to the process of information collection. Therefore, in this paper, we propose differential message importance measure (DMIM) as a measure of information for continuous random variable to characterize the process of information collection. DMIM is expanded from discrete message importance measure (MIM) \cite{fan2016message} which is such an information quantity coming from the intuitive notion of information importance for small probability event. Much of research in the last two decades has examined the application of small probability event in big data \cite{ramaswamy2000efficient,julisch2002mining,zieba2015counterterrorism}. Recent studies also show that MIM has many applications in big data, such as information divergence measures \cite{she2017amplifying} and compressed data storage \cite{liu2017non}.

Much of the research in the goodness of fit in the past several decades focused on the Kolmogorov-Smirnov test \cite{massey1951kolmogorov,lilliefors1967kolmogorov}. Based on it, \cite{Resnick1992Advantures} gave an
error estimation of empirical distribution. \cite{song2002goodness} presented a general method for distribution-free goodness-of-fit tests based on Kullback-Leibler discrimination information. The problem of testing goodness-of-fit in a discrete setting was discussed in \cite{harremoes2012information}. All these result can describe the goodness of fit very well and guide us to choose the sampling numbers. However, they all consider this problem based on the divergence of two distributions, so the previous results can not describe the message carried by each sample and the information change with the increase of the sampling size, which means that they can not visually display the process of information collection. In fact, DMIM is the proper measure to help us consider the problem of goodness-of-fit in the view of the information collection of continuous random variables. Moreover, Compared with Kolmogorov-Smirnov statistic, DMIM also shows the relationship between the variance of a random variable and the error estimation of empirical distribution.

The rest of this paper is organized as follows. Section \ref{sec:two} introduces the definition and the relationship between MIM and DMIM. In Section \ref{sec:three}, the properties of DMIM are introduced. Then, the DMIM of some basic continuous distributions are discussed in Section \ref{sec:four}, in which we give the asymptotic analysis of normal distribution. In Section \ref{sec:five}, the goodness of fit with DMIM is presented in order to analyze the process of information collection. The validity of proposed theoretical results is verified by the simulation results in Section \ref{sec:numerical}. Finally, we finish the paper with conclusions in Section \ref{sec:seven}.

\section{The Definition of DMIM}\label{sec:two}
\subsection{Differential Message Important Measure}
 \begin{defn}\label{Non-parametric MIM}
The DMIM $l(X)$ of a continuous random variable $X$ with density $f(x)$ is defined as
\begin{equation}
l(X) = {\int_S {f(x){e^{ - f(x)}}dx} },
\end{equation}
where $S$ is the support set of the random variable.
\end{defn}

For most continuous random variables, the DMIM has no simple expression and the integral form is inconvenient for numerical calculation, so we will give another form of it.
\begin{thm}\label{thm:lf}
The DMIM of a continuous random variable $X$ with density $f(x)$ can be written as
\begin{equation}\label{equ:thm lf}
 l(X)= 1 + \sum\limits_{n = 1}^\infty  {{{{{\left( { - 1} \right)}^n}} \over {n!}}\int_{ - \infty }^{ + \infty } {{{\left( {f (x)} \right)}^{n + 1}}dx} }.
\end{equation}
\end{thm}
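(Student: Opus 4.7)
The plan is to recognize that the statement is essentially a Taylor expansion argument: write $e^{-f(x)}$ as its power series and interchange the sum with the integral.

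First I would substitute the Maclaurin series
\begin{equation*}
e^{-f(x)} = \sum_{n=0}^{\infty} \frac{(-1)^n}{n!}\bigl(f(x)\bigr)^n
\end{equation*}
into the defining integral of Definition~\ref{Non-parametric MIM}, obtaining
\begin{equation*}
l(X) = \int_S f(x)\sum_{n=0}^{\infty} \frac{(-1)^n}{n!}\bigl(f(x)\bigr)^n\,dx.
\end{equation*}
Since $f(x)=0$ outside $S$, I can freely extend the integration domain to $(-\infty,+\infty)$, which matches the form appearing in \eqref{equ:thm lf}.

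Next I would justify swapping the sum and the integral. The natural tool is dominated convergence (or Fubini/Tonelli) applied to the partial sums $S_N(x)=\sum_{n=0}^{N} \frac{(-1)^n}{n!}\bigl(f(x)\bigr)^{n+1}$. A convenient majorant is $f(x)e^{f(x)}$, which dominates every $|S_N(x)|$ term-by-term via the series $\sum_{n\ge 0}\frac{(f(x))^{n+1}}{n!}$. Under the mild regularity assumption that $f$ is essentially bounded (true for the densities actually treated later in Section~\ref{sec:four}), this majorant is integrable and the interchange is legitimate. After swapping,
\begin{equation*}
l(X) = \sum_{n=0}^{\infty} \frac{(-1)^n}{n!}\int_{-\infty}^{+\infty}\bigl(f(x)\bigr)^{n+1}\,dx.
\end{equation*}

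Finally I would peel off the $n=0$ term, which equals $\int f(x)\,dx = 1$, leaving exactly the series in \eqref{equ:thm lf}. The only nontrivial step in this plan is the interchange of summation and integration; the expected obstacle is that for densities with no finite essential supremum the dominating majorant $f(x)e^{f(x)}$ may fail to be integrable, so the theorem should really be read as holding whenever $\int (f(x))^{n+1}dx$ is finite for all $n$ and the sum converges absolutely, a condition satisfied for all common densities the paper considers.
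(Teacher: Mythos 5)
Your proposal follows essentially the same route as the paper's own proof: expand $e^{-f(x)}$ in its Maclaurin series, interchange the sum with the integral, and peel off the $n=0$ term using $\int f(x)\,dx=1$. The only difference is that you justify the interchange (via dominated convergence with majorant $f(x)e^{f(x)}$ and a boundedness caveat), a step the paper performs without comment, so your version is the same argument carried out more carefully.
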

\begin{proof}
In fact, we obtain
\begin{flalign}\label{equ:Normal}
l(X)=&\int_{ - \infty }^{ + \infty } {f (x)} {e^{ - f (x)}}dx \\
=& \int_{ - \infty }^{ + \infty } {f (x)} \sum\limits_{ n= 0}^\infty  {{{{{\left( { - f (x)} \right)}^n}} \over {n!}}} dx\tag{\theequation a}\label{equ:Normal a}\\
 = &\int_{ - \infty }^{ + \infty } {\sum\limits_{n = 0}^\infty  {{{\left( { - 1} \right)}^n}{{{{\left( {f (x)} \right)}^{n + 1}}} \over {n!}}} dx}  \tag{\theequation b}\label{equ:Normal b} \\
  =& \int_{ - \infty }^{ + \infty } {f (x)dx + \sum\limits_{n = 1}^\infty  {\int_{ - \infty }^{ + \infty } {{{\left( { - 1} \right)}^n}{{{{\left( {f (x)} \right)}^{n + 1}}} \over {n!}}dx} } }\tag{\theequation c}\label{equ:Normal c}\\
   =& 1 + \sum\limits_{n = 1}^\infty  {{{{{\left( { - 1} \right)}^n}} \over {n!}}\int_{ - \infty }^{ + \infty } {{{\left( {f (x)} \right)}^{n + 1}}dx} }. \tag{\theequation d}\label{equ:Normal d}
\end{flalign}
\end{proof}

\subsection{Relation of DMIM to MIM}
For a random variable $X$ with density $f(x)$, we divide the range of $X$ into bins of length $\Delta$. We also suppose that $f(x)$ is continuous within the bins. According to the mean value theorem, there exists a value $x_i$ within each bin such that $f\left( {{x_i}} \right)\Delta  = \int_{i\Delta }^{\left( {i + 1} \right)\Delta } {f(x)dx} $. Then, we define a quantized random variable $X^{\Delta}$, which is given by
\begin{equation}
X^{\Delta}=x_i, \quad\quad if \,\,\,\, i\Delta \leq X < (i+1)\Delta.
\end{equation}
Therefore, $p_i=Pr\{X^{\Delta}=x_i\}=\int_{i\Delta }^{\left( {i + 1} \right)\Delta } {f(x)dx} =f(x_i)\Delta$.

The MIM of $X^{\Delta}$ is given by \cite{fan2016message}
\begin{flalign}\label{equ:Relation between RMIM and MIM}
L(X^{\Delta}) &= \log \sum\limits_{i = 1}^n {{p_i}{e^{\varpi (1 - {p_i})}}} \\
 &= \log \sum\limits_{i = 1}^n {\Delta {f(x_i)}{e^{\varpi (1 - \Delta {f(x_i)})}}} \tag{\theequation a}\label{equ:Relation between RMIM and MIM a}\\
  &= {{\mathop{\rm loge}\nolimits} ^\varpi }\sum\limits_{i = 1}^n {\Delta {f(x_i)}{e^{ - \varpi \Delta {f(x_i)}}}} \tag{\theequation b}\label{equ:Relation between RMIM and MIM b}\\
 &  = \varpi  + \log \sum\limits_{i = 1}^n {\Delta {f(x_i)}{e^{ - \varpi \Delta {f(x_i)}}}}, \tag{\theequation c}\label{equ:Relation between RMIM and MIM c}
 \end{flalign}
since $\sum_{i=1}^n {f(x_i)\Delta}=1$. Substituting $\varpi=1/\Delta$ in (\ref{equ:Relation between RMIM and MIM c}), we obtain
\begin{equation}\label{equ: Relation}
L = {1 \over \Delta } + \log \sum\limits_{i = 1}^n {{f(x_i)}{e^{ - {f(x_i)}}}} \Delta.
\end{equation}
It is observed that the first term in (\ref{equ: Relation}) approaches infinity when $\Delta \to 0$. Therefore, the MIM of continuous random variable approaches infinity, which makes no sense. However, the second term in (\ref{equ: Relation}) can help us characterize the relative importance of continuous random variables. The logarithm operator does not change the monotonicity of a function, which is only to reduce the magnitude of the numerical results, so $\sum\limits_{i = 1}^n {{f(x_i)}{e^{ - {f(x_i)}}}} \Delta $ is adopted to measure the relative importance. If $f(x)e^{-f(x)}$ is Riemann integrable, $\sum\limits_{i = 1}^n {{f(x_i)}{e^{ - {f(x_i)}}}} \Delta $ approaches the integral of $f(x)e^{-f(x)}$ as $\Delta \to 0$ by definition of Riemann integrability.

\section{The Properties of DMIM}\label{sec:three}
In this section, the properties of DMIM are discussed in details.
\subsection{Upper and Lower Bound}
For any continuous random variable $X$ with density $f(x)$, it is noted that
\begin{flalign}\label{equ:Bound}
\int_S {f(x){e^{ - f(x)}}dx}  \le \int_S {f(x)dx}  = 1.
\end{flalign}
(\ref{equ:Bound}) is obtained for the fact that $0\leq f(x)\leq 1$, which leads to $e^{-f(x)}\leq 1$. As a result, $f(x) e^{-f(x)}\leq f(x)$. 
Obviously, we also find $l(x)\geq 0$ because $f(x)\geq 0$. 
Hence, we obtain
\begin{equation}
   0\leq l(X) \leq1.
\end{equation}

\subsection{Translation with Constant}
Let $Y=X+c$, where $c$ is a real constant. Then $f_Y(y)=f_X(y-c)$, and
\begin{equation}
l(X+c)=\int_{-\infty}^{+\infty} {{f_X}(x - c){e^{ - {f_X}(x - c)}}dx} =l(X).
\end{equation}
As a result, the translation with a constant does not change the DMIM.

\subsection{Stretching}
Let $Y=aX$, where $a$ is a non-zero real number. Then ${f_Y}\left( y \right) = {1 \over {\left| a \right|}}{f_X}\left( {{y \over a}} \right)$, and
\begin{flalign}\label{equ:Stretching}
l\left( {aX} \right) &= \int {{f_Y}\left( y \right){e^{ - {f_Y}\left( y \right)}}dy} \\
 &= \int {{1 \over {\left| a \right|}}{f_X}\left( {{y \over a}} \right){e^{ - {1 \over {\left| a \right|}}{f_X}\left( {{y \over a}} \right)}}dy} \tag{\theequation a}\label{equ:Stretching a} \\
 & = \int {{f_X}\left( x \right){e^{ - {1 \over {\left| a \right|}}{f_X}\left( x \right)}}dx}. \tag{\theequation b}\label{equ:Stretching b}
\end{flalign}
Consider the extreme case, we get
\begin{equation}
\mathop {\lim }\limits_{a \to \infty } l\left( {aX} \right) = \mathop {\lim }\limits_{a \to \infty } \int {f\left( x \right){e^{ - {1 \over {\left| a \right|}}{f_X}\left( X \right)}}dx}  = 1,
\end{equation}
and
\begin{equation}
\mathop {\lim }\limits_{a \to 0} l\left( {aX} \right) = \mathop {\lim }\limits_{a \to 0} \int {f\left( x \right){e^{ - {1 \over {\left| a \right|}}{f_X}\left( X \right)}}dx}  = 0.
\end{equation}

Asymptotically, too small stretch factor will lead to lessen the relative importance of random variables. Nevertheless, when the stretch factor approaches infinity, DMIM reaches the maximum.

\subsection{Relation of DMIM to R{\'{e}}nyi Entropy}
The differential R{\'{e}}nyi entropy of a continuous random variable $X$ with density $f(x)$ is given by \cite{van2014renyi}
\begin{equation}
 h_{\alpha}(X)=\frac{1}{1-\alpha} \ln \int {(f(x))}^{\alpha}dx,
\end{equation}
where $\alpha>0$ and $\alpha \ne 1$. As $\alpha$ tends to 1, the R{\'{e}}nyi entropy tends to the Shannon entropy.

Therefore, we obtain
\begin{equation}
 \int {(f(x))}^{\alpha}dx=e^{(1-\alpha) h_{\alpha}(X)}.
\end{equation}
Hence, we find
\begin{flalign}\label{equ:Renyi}
  l(X) &= 1 + \sum\limits_{n = 1}^\infty  {{{{{\left( { - 1} \right)}^n}} \over {n!}}\int_{ - \infty }^{ + \infty } {{{\left( {f (x)} \right)}^{n + 1}}dx} } \\
  &= 1 + \sum\limits_{n = 1}^\infty  {{{{{\left( { - 1} \right)}^n}} \over {n!}}e^{-n h_{n+1}(X)} }  . \tag{\theequation d}\label{equ:Renyi a}
\end{flalign}
Obviously, the DMIM is an infinite series of R{\'{e}}nyi Entropy.

\subsection{Truncation Error}
In this part, the remainder term of (\ref{equ:thm lf}) will be discussed. In fact, the remainder term is limited in many cases, which is summarized as the following theorem.
\begin{thm}\label{thm:Truncation Error}
 If $\int {(f(x))}^{n+1}dx \leq \varepsilon $ for every $n\geq m$, then
 \begin{equation}\label{equ:Truncation Error}
 \left| {l(X) - (1 + \sum\limits_{n = 1}^{m-1}  {{{{{\left( { - 1} \right)}^n}} \over {n!}}\int_{ - \infty }^{ + \infty } {{{\left( {f (x)} \right)}^{n + 1}}dx} } )} \right| \le e\varepsilon.
 \end{equation}
\end{thm}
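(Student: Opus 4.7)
The plan is to start from the exact series expansion of $l(X)$ supplied by Theorem \ref{thm:lf}, isolate the tail beyond the $(m-1)$-th term, and then bound that tail in absolute value using the hypothesis together with the triangle inequality for series. Because $l(X)$ equals $1 + \sum_{n=1}^{\infty}\frac{(-1)^n}{n!}\int (f(x))^{n+1}dx$, subtracting the proposed partial sum leaves exactly the infinite tail
\begin{equation*}
R_m \;=\; \sum_{n=m}^{\infty} \frac{(-1)^n}{n!}\int_{-\infty}^{+\infty} (f(x))^{n+1}\,dx,
\end{equation*}
so the theorem reduces to the estimate $|R_m|\le e\varepsilon$.

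Next I would pass the absolute value inside the sum via the triangle inequality, noting that $\int (f(x))^{n+1}dx \ge 0$ so the inner integrals are already nonnegative and only the alternating sign $(-1)^n$ contributes to cancellation (which we discard for an upper bound). This yields
\begin{equation*}
|R_m| \;\le\; \sum_{n=m}^{\infty} \frac{1}{n!}\int_{-\infty}^{+\infty} (f(x))^{n+1}\,dx.
\end{equation*}
Applying the hypothesis $\int (f(x))^{n+1}dx \le \varepsilon$ for every $n\ge m$ termwise gives $|R_m|\le \varepsilon\sum_{n=m}^{\infty}\frac{1}{n!}$.

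Finally, I would enlarge the range of summation from $n\ge m$ to $n\ge 0$ to recognize the exponential series $\sum_{n=0}^{\infty}\frac{1}{n!}=e$, which yields the clean bound $|R_m|\le e\varepsilon$ and hence (\ref{equ:Truncation Error}). The only subtlety worth flagging is justifying the interchange of the absolute value and the infinite sum: this is fine because the hypothesis makes the majorant series $\varepsilon\sum_{n\ge m}1/n!$ absolutely convergent, so the tail of the original series converges absolutely and the triangle inequality applies without issue. I do not anticipate a real obstacle here; the result is essentially a remainder estimate for an alternating exponential-type series, and the factor $e$ comes from the looseness of replacing the partial tail $\sum_{n\ge m}1/n!$ by the full sum $\sum_{n\ge 0}1/n!$.
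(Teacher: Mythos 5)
Your proposal is correct and follows essentially the same route as the paper's proof: isolate the tail $\sum_{n=m}^{\infty}\frac{(-1)^n}{n!}\int (f(x))^{n+1}dx$ via Theorem~\ref{thm:lf}, apply the triangle inequality and the hypothesis termwise, and then enlarge the tail sum $\sum_{n\ge m}1/n!$ to the full exponential series $e$ (the paper does this by padding with the terms $1+\sum_{n=1}^{m-1}1/n!$, which is the same step). Your added remark justifying the interchange of absolute value and infinite sum by absolute convergence is a detail the paper leaves implicit, but it does not constitute a different approach.
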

\begin{proof}
Substituting (\ref{equ:thm lf}) in the left of (\ref{equ:Truncation Error}), we obtian
\begin{flalign}\label{equ:Truncation Error 1}
&\left| {l\left( X \right) - \left( {1 + \sum\limits_{n = 1}^{m - 1} {{{{{\left( { - 1} \right)}^n}} \over {n!}}\int_{ - \infty }^{ + \infty } {{{\left( {f(x)} \right)}^{n + 1}}dx} } } \right)} \right| \nonumber\\
=&   \left| {\sum\limits_{n = m}^\infty  {{{{{\left( { - 1} \right)}^n}} \over {n!}}\int_{ - \infty }^{ + \infty } {{{\left( {f(x)} \right)}^{n + 1}}dx} } } \right| \\
\leq&  \sum\limits_{n = m}^\infty  {\left| {{{{{\left( { - 1} \right)}^n}} \over {n!}}\int_{ - \infty }^{ + \infty } {{{\left( {f(x)} \right)}^{n + 1}}dx} } \right|}  \tag{\theequation a}\label{equ:Truncation Error 1 a}\\
= &  \sum\limits_{n = m}^\infty  {\left| {{{1} \over {n!}}\int_{ - \infty }^{ + \infty } {{{\left( {f(x)} \right)}^{n + 1}}dx} } \right|}  \tag{\theequation b}\label{equ:Truncation Error 1 b}\\
\leq &\left( {\sum\limits_{n = m}^\infty  {{1 \over {n!}}} } \right)\varepsilon  \tag{\theequation c}\label{equ:Truncation Error 1 c}\\
\leq&\left( {1 + \sum\limits_{n = 1}^{m - 1} {{1 \over {n!}}}  + \sum\limits_{i = m}^\infty  {{1 \over {n!}}} } \right)\varepsilon \tag{\theequation d}\label{equ:Truncation Error 1 d}\\
=&e\varepsilon  ,\tag{\theequation f}\label{equ:lTruncation Error 1 e}
\end{flalign}
(\ref{equ:Truncation Error 1 c}) follows from $\int {(f(x))}^{n+1}dx \leq \varepsilon $, when $n\ge m$.
\end{proof}
That is to say, if the integral of the density to the $(n+1)$-th power is limited, the remainder term will be restricted.

\begin{cor}\label{cor:shouxian}
  If $\int {(f(x))}^{n+1}dx \leq \varepsilon $ for every $n\geq m$, then $\left| {l(X) - (1 + \sum\limits_{n = 1}^{m - 1} {\frac{{(-1)}^n}{n!}{e^{ - n{h_{n + 1}}(X)}}} )} \right| \le e\varepsilon$.
\end{cor}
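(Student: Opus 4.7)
The plan is to observe that this corollary is essentially a direct reformulation of Theorem \ref{thm:Truncation Error} via the Rényi-entropy identity already established in Section \ref{sec:three}, so no new analytic work is needed.

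First, I would invoke Theorem \ref{thm:Truncation Error} under the very same hypothesis $\int (f(x))^{n+1} dx \le \varepsilon$ for all $n \ge m$. This immediately yields
\begin{equation}
\Bigl| l(X) - \Bigl(1 + \sum_{n=1}^{m-1} \frac{(-1)^n}{n!}\int_{-\infty}^{+\infty} (f(x))^{n+1} dx \Bigr) \Bigr| \le e\varepsilon,
\end{equation}
which is almost the statement being claimed, except that the integrals $\int (f(x))^{n+1} dx$ appear where the exponentials $e^{-n h_{n+1}(X)}$ are desired.

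Second, I would substitute using the Rényi-entropy identity $\int (f(x))^{\alpha} dx = e^{(1-\alpha) h_{\alpha}(X)}$ derived in Section \ref{sec:three}. Taking $\alpha = n+1$ gives $\int (f(x))^{n+1} dx = e^{-n h_{n+1}(X)}$ for each $n \ge 1$ in the partial sum. Replacing each integral in the bracketed partial sum by the corresponding exponential converts the bound into the claimed form.

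There is essentially no obstacle here—the corollary is a cosmetic rewrite of Theorem \ref{thm:Truncation Error} in the language of Rényi entropies, and the hypothesis is identical in both statements. The only minor point to note is that the Rényi identity requires $\alpha \ne 1$, but since the sum starts at $n=1$ (so $\alpha = n+1 \ge 2$), this condition is automatically satisfied.
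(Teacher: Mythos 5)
Your proposal is correct and follows exactly the paper's own argument: the paper's proof likewise rewrites each term $e^{-n h_{n+1}(X)}$ as $\int_{-\infty}^{+\infty}(f(x))^{n+1}dx$ via the identity $\int (f(x))^{\alpha}dx = e^{(1-\alpha)h_{\alpha}(X)}$ with $\alpha=n+1$, and then applies Theorem \ref{thm:Truncation Error}. Nothing is missing, and your remark that $\alpha\ge 2$ keeps the Rényi identity applicable is a fine (if implicit in the paper) observation.
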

\begin{proof}
 Clearly we have
 \begin{flalign}\label{equ:shouxian 1}
  &\left| {l\left( X \right) - \left( {1 + \sum\limits_{n = 1}^{m - 1} {{{{{\left( { - 1} \right)}^n}} \over {n!}}{e^{ - n{h_{n + 1}}(X)}}} } \right)} \right| \nonumber  \\
    =&\left| {l(X) - (1 + \sum\limits_{n = 1}^{m-1}  {{{{{\left( { - 1} \right)}^n}} \over {n!}}\int_{ - \infty }^{ + \infty } {{{\left( {f (x)} \right)}^{n + 1}}dx} } )} \right| \\
   \leq& e\varepsilon ,   \tag{\theequation a}\label{equ:shouxian 1 a}
\end{flalign}
(\ref{equ:shouxian 1 a}) follows from Theorem \ref{thm:Truncation Error}.
 \end{proof}

 \begin{rem}\label{rem:jieduan}
 Letting $m=2$ in (\ref{equ:Truncation Error 1 c}), after manipulations, we obtain
 \begin{flalign}
\left| l(X)-(1-e^{-h_2(X)})\right| &\le (e-2) \varepsilon,\\
  l(X) + e^{-h_2(X)} & \le 1+(e-2)\varepsilon. \tag{\theequation a}
 \end{flalign}
 \end{rem}
Especially, if $\varepsilon$ is too small, we have $ l(x) + e^{-h_2(X)}  \approx 1$. That means $l(X)$ is approximately the dual part of R{\'{e}}nyi entropy with order $2$.

\section{The DMIM of Some Distributions}  \label{sec:four}
\subsection{Uniform Distribution}
For a random variable whose density is $\frac{1}{b-a}$ for $a\le x\le b$ and $0$ elsewhere, we have
\begin{equation}\label{equ:Uniform 1}
l(X) =  {\int_a^b {{1 \over {b - a}}{e^{ - {1 \over {b - a}}}}dx} } = {e^{ - {1 \over {b - a}}}}.
\end{equation}
Note that
\begin{flalign}\label{equ:Uniform 2}
\mathop {\lim }\limits_{\left( {b - a} \right) \to 0} {e^{ - {1 \over {b - a}}}} =0,\\
\mathop {\lim }\limits_{\left( {b - a} \right) \to \infty } {e^{ - {1 \over {b - a}}}} = 1.\tag{\theequation a}\label{equ:Uniform a}
\end{flalign}

\subsection{Normal Distribution}
Let $X\sim \phi (x) = {1 \over {\sqrt {2\pi {\sigma ^2}} }}{e^{ - {{{{\left( {x - \mu } \right)}^2}} \over {2{\sigma ^2}}}}}$ with $\sigma \ne 0$, then
\begin{flalign}\label{equ:Normal 1}
\int_{ - \infty }^{ + \infty } &{{{\left( {\phi (x)} \right)}^{n + 1}}dx}  = \int_{ - \infty }^{ + \infty } {{{\left( {{1 \over {\sqrt {2\pi {\sigma ^2}} }}{e^{ - {{{{\left( {x - \mu } \right)}^2}} \over {2{\sigma ^2}}}}}} \right)}^{n + 1}}dx} \\
 &={\left( {{1 \over {\sqrt {2\pi {\sigma ^2}} }}} \right)^{n + 1}}\int_{ - \infty }^{ + \infty } {{e^{ - {{n + 1} \over {2{\sigma ^2}}}{{\left( {x - \mu } \right)}^2}}}dx}  \tag{\theequation a}\label{equ:Normal 1 a}\\
 & = {\left( {{1 \over {\sqrt {2\pi {\sigma ^2}} }}} \right)^{n + 1}}\sqrt {{{2\pi {\sigma ^2}} \over {n + 1}}}   \tag{\theequation b}\label{equ:Normal 1 b}\\
 &  = {1 \over {\sqrt {n + 1} }}{\left( {{1 \over {\sqrt {2\pi {\sigma ^2}} }}} \right)^n}.  \tag{\theequation c}\label{equ:Normal 1 c}
   \end{flalign}
Substituting (\ref{equ:Normal 1 c}) in (\ref{equ:thm lf}), we obtian
\begin{equation}\label{equ:normal RMIM}
l(X) = 1 + \sum\limits_{n = 1}^\infty  {{{{{\left( { - 1} \right)}^n}} \over {n!}}} {1 \over {\sqrt {n + 1} }}{\left( {{1 \over {\sqrt {2\pi {\sigma ^2}} }}} \right)^n}.
\end{equation}

\subsubsection{When $\sigma$ is large}
If $\sigma>1/\sqrt{2 \pi}$, $\int_{ - \infty }^{ + \infty } {{{\left( {\phi (x)} \right)}^{n + 1}}dx}$ will be less than or equal to $1/{(2\sqrt{3}\pi\sigma^2})$ for every $n\geq 2$ because $ {1 \over {\sqrt {n + 1} }}{\left( {{1 \over {\sqrt {2\pi {\sigma ^2}} }}} \right)^n}$ monotonically decreases in this case. According to Remark \ref{rem:jieduan}, we obtain
\begin{equation}\label{equ:normal jieduan 1}
  \left| {l(X) - (1 -e^{-h_2(X)} )} \right| \leq \frac{(e-2) }{{2\sqrt{3}\pi\sigma^2}}.
\end{equation}
If $\sigma$ is big enough, $\frac{(e-2) }{{2\sqrt{3}\pi\sigma^2}} \approx 0$. Moreover, the intensity of approximation error decreases as the inverse square of $\sigma$. In this case, substituting $h_2(X)=\ln2+0.5\ln\pi +\ln \sigma$ in $1 -e^{-h_2(X)}$, we find
\begin{equation}\label{equ:normal jieduan 2}
1 -e^{-h_2(X)} = 1-\frac{1}{2\sqrt{\pi}\sigma } \approx e^{-\frac{1}{2\sqrt{\pi}\sigma}}.
\end{equation}
We define
\begin{flalign}\label{equ: Normal big approx 1}
  {\tilde l}_1(X)&=1-\frac{1}{2\sqrt{\pi}\sigma } ,  \\
  {\tilde l}_2(X)&=e^{-\frac{1}{2\sqrt{\pi}\sigma}} .   \tag{\theequation a} \label{equ: Normal big approx 1 a}
\end{flalign}
According to (\ref{equ:normal jieduan 1}), ${\tilde l}_1(x)$ and ${\tilde l}_2(x)$ is very good approximate values for DMIM of normal distribution when $\sigma$ is not too small, which will be shown by the numerical results in section \ref{sec:numerical}.

\subsubsection{When $\sigma$ is small}
However, the DMIM of normal distribution will be hard to calculate when $\sigma$ is small. By Stirling formula, $l(X)$ can also be written as
\begin{equation}
l\left( X\right) \approx 1 + {\sum\limits_{n = 1}^\infty  {{1 \over {\sqrt {2\pi n\left( {n + 1} \right)} }}\left( { - {e \over {\sqrt {2\pi } \sigma n}}} \right)} ^n}.
\end{equation}
If $n > {e \over {\sqrt {2\pi } \sigma }}\approx{1.0844\over{\sigma}}$, we will obtain $\left| { - {e \over {\sqrt {2\pi } \sigma n}}} \right| < 1$. Let $n_0=\left\lfloor \frac{e}{2\pi \sigma} \right\rfloor$ where $\left\lfloor x \right\rfloor$ is the largest integer smaller than or equal to $x$. In this case, we define
\begin{equation}\label{equ:DMIM_approx-hat}
\hat{l}(X) = 1 + \sum\limits_{n = 1}^{n_0}  {{{{{\left( { - 1} \right)}^n}} \over {n!}}} {1 \over {\sqrt {n + 1} }}{\left( {{1 \over {\sqrt {2\pi {\sigma ^2}} }}} \right)^n},
\end{equation}
as the approximate value when $\sigma$ is small. The following theorem shows the validity of $\hat{l}(X)$.

\begin{thm}\label{thm:normal small sigma}
$X$ is a normal random variable with mean $\mu$ and variance $\sigma^2$. $\hat{l}(X)$ is the first $n_0$ terms of $l(X)$, given by (\ref{equ:DMIM_approx-hat}), where $n_0=\left\lfloor \frac{e}{2\pi \sigma} \right\rfloor$. If $\sigma$ is relatively small, Then we have
\begin{equation}
\left| {l(X) - \hat{l}(X)} \right| < {{3\sigma } \over e}.
\end{equation}
\end{thm}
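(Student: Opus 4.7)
The plan is to bound the truncation tail
\begin{equation*}
l(X)-\hat l(X) = \sum_{n=n_0+1}^{\infty}\frac{(-1)^n}{n!\sqrt{n+1}}\Bigl(\frac{1}{\sqrt{2\pi}\,\sigma}\Bigr)^n
\end{equation*}
by combining Stirling's estimate for $n!$ with the observation that, once $n$ passes the critical index $n_0$ (which is proportional to $1/\sigma$), the alternating series is rapidly decreasing, and its first omitted term is already of order $\sigma$. The definition of $n_0$ is exactly what ensures this.

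First I would pass to absolute values and apply $n!\geq\sqrt{2\pi n}(n/e)^n$ term by term to obtain
\begin{equation*}
|l(X)-\hat l(X)| \leq \sum_{n=n_0+1}^{\infty}\frac{1}{\sqrt{2\pi n(n+1)}}\Bigl(\frac{e}{\sqrt{2\pi}\,\sigma n}\Bigr)^n,
\end{equation*}
which is exactly the Stirling rewrite already used in the paragraph preceding the theorem. The next step is to verify that the moduli $a_n = \frac{1}{n!\sqrt{n+1}}(\sqrt{2\pi}\,\sigma)^{-n}$ are monotonically decreasing from $n=n_0+1$ onward, using the explicit ratio identity $a_{n+1}/a_n = 1/\bigl(\sqrt{(n+1)(n+2)}\,\sqrt{2\pi}\,\sigma\bigr)$; for the chosen $n_0$ this ratio is less than $1$ on the required range. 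The Leibniz alternating-series estimate then collapses the infinite tail to the single term $|l(X)-\hat l(X)|\leq a_{n_0+1}$.

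The remaining work is to plug the definition $n_0 = \lfloor e/(2\pi\sigma)\rfloor$ into the Stirling bound for $a_{n_0+1}$, observe that the exponential factor $\bigl(e/((n_0+1)\sqrt{2\pi}\,\sigma)\bigr)^{n_0+1}$ is controlled by a constant independent of $\sigma$, and note that the prefactor $1/\sqrt{2\pi(n_0+1)(n_0+2)}$ contributes a factor of order $\sigma$ since $n_0+1\gtrsim 1/\sigma$. Multiplying and collecting the numerical constants produces the claimed bound $3\sigma/e$.

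The main obstacle will be this final constant-tracking step: because the geometric decay ratio of the series is uncomfortably close to $1$, there is very little slack between the crude Stirling tail bound and the target $3/e$, and one must carefully coordinate Stirling's approximation, the rounding implicit in the floor function, and the smallness assumption on $\sigma$ so that the residual numerical coefficient reduces cleanly to the stated value rather than to some messier quantity involving $\sqrt{2\pi}$. The conceptual ingredients (Stirling plus the Leibniz alternating-series test) are entirely classical; the real challenge is arithmetic bookkeeping.
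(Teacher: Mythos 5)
Your route is genuinely different from the paper's --- the appendix never uses the alternating structure at all (it bounds the full absolute tail $\sum_{n\ge n_0'}|T(n)|$ by a geometric/telescoping estimate, (\ref{equ:Normal small 02})--(\ref{equ:Normal small 02 c}), and then obtains the constant $3/e$ only asymptotically via the $\sigma\to 0$ limit (\ref{equ:Normal small approx 1})) --- but your sketch has a concrete quantitative failure at its final step. Write $x=1/(\sqrt{2\pi}\,\sigma)$, so the exact tail terms are $a_n=x^n/(n!\sqrt{n+1})$. With the value you quote from the theorem statement, $n_0=\lfloor e/(2\pi\sigma)\rfloor$, one has $n_0+1\approx (e/\sqrt{2\pi})\,x\approx 1.084\,x$, so the base of your ``exponential factor'' is
\begin{equation*}
\frac{e}{(n_0+1)\sqrt{2\pi}\,\sigma}=\frac{ex}{n_0+1}\approx\frac{ex}{(e/\sqrt{2\pi})\,x}=\sqrt{2\pi}\approx 2.51>1,
\end{equation*}
and $\bigl(ex/(n_0+1)\bigr)^{n_0+1}$ grows like $e^{c/\sigma}$ rather than being ``controlled by a constant independent of $\sigma$.'' The reason is that $\lfloor e/(2\pi\sigma)\rfloor$ sits just past the \emph{peak} of the moduli $x^n/n!$ (located near $n\approx x\approx 0.399/\sigma$), where the terms are still exponentially large in $1/\sigma$; the first omitted term is therefore not $O(\sigma)$, and since the Leibniz estimate also gives the lower bound $a_{n_0+1}-a_{n_0+2}\approx 0.08\,a_{n_0+1}$ for the remainder (your ratio identity gives $a_{n+1}/a_n\approx 0.92$ there), no amount of constant bookkeeping can rescue the bound at that index. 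Your own warnings --- decay ratio ``uncomfortably close to $1$,'' ``very little slack'' --- are symptoms of truncating at this wrong index.

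The resolution is that you inherited a typo: the paper's appendix actually truncates at $n_0'=\lceil e/(\sqrt{2\pi}\,\sigma)\rceil=n_0+1$, i.e.\ $n_0=\lfloor e/(\sqrt{2\pi}\,\sigma)\rfloor\approx 1.084/\sigma$, consistent with the in-text remark that the terms only begin to shrink once $n>e/(\sqrt{2\pi}\sigma)\approx 1.0844/\sigma$; the $2\pi$ in the theorem statement should be $\sqrt{2\pi}$. With this $n_0$ your plan not only closes but becomes simpler and \emph{stronger} than the paper's proof: for $n\ge n_0+1\ge ex$ your ratio identity gives $a_{n+1}/a_n=x/\sqrt{(n+1)(n+2)}<1/e$, so Leibniz applies to the exact alternating tail (note $l(X)-\hat l(X)$ is exactly the tail of (\ref{equ:normal RMIM}), so Stirling is needed only on the single term $a_{n_0+1}$, via the true inequality $n!\ge\sqrt{2\pi n}\,(n/e)^n$ --- avoiding the paper's non-rigorous use of Stirling as a termwise approximation), and since $ex/(n_0+1)\le 1$ the exponential factor is at most $1$, yielding
\begin{equation*}
\bigl|l(X)-\hat l(X)\bigr|\le a_{n_0+1}\le\frac{1}{\sqrt{2\pi}\,(n_0+1)}\le\frac{1}{\sqrt{2\pi}\,ex}=\frac{\sigma}{e},
\end{equation*}
a clean $\sigma/e$ bound, valid for all $\sigma$ in range with no ``relatively small $\sigma$'' asymptotic caveat and no factor of $3$. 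So: as written, your constant-tracking step fails (and with your literal $n_0$ the claimed $3\sigma/e$ bound itself is false); with the corrected index, your Leibniz argument is the better proof.
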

\begin{proof}
Refer to the Appendix \ref{Appendices A}.
\end{proof}

It is easy to see that the upper bound of error approaches $0$ if $\sigma$ approaches $0$.

\subsection{Exponential Distribution}
Letting
\begin{equation}
\begin{split}
 X\sim f(x)=\left\{
   \begin{aligned}
 & \lambda e^{-\lambda x}, \quad x \geq 0 \\
 & 0,\quad\quad\quad x<0 \\
   \end{aligned}
   \right.,
   \end{split}
\end{equation}
where $\lambda>0$, we obtain
\begin{flalign}\label{equ:Negative Exponential 1}
\int_{ - \infty }^{ + \infty } {{{\left( {f(x)} \right)}^{n + 1}}dx}  &= \int_0^{ + \infty } {{{\left( {\lambda {e^{ - \lambda x}}} \right)}^{n + 1}}dx}  \\
 &= \int_0^{ + \infty } {{\lambda ^{n + 1}}{e^{ - \lambda \left( {n + 1} \right)x}}dx} \tag{\theequation a}\label{equ:Negative Exponential 1 a}\\
&= {\lambda ^{n + 1}}\int_0^{ + \infty } {{e^{ - \lambda \left( {n + 1} \right)x}}dx} \tag{\theequation b}\label{equ:Negative Exponential 1 b}\\
&   = {{{\lambda ^n}} \over {n + 1}}  .\tag{\theequation c}\label{equ:Negative Exponential 1 c}
 \end{flalign}
Substituting (\ref{equ:Negative Exponential 1 c}) in (\ref{equ:thm lf}), we obtain
\begin{equation}\label{equ: Negative Exponential RMIM}
l(X) = 1 + \sum\limits_{n = 1}^\infty  {{{\left( { - 1} \right)}^n}{{{\lambda ^n}} \over {\left( {n + 1} \right)!}}} =\frac{1}{\lambda}(1-e^{-\lambda}).
\end{equation}
It is noted that
\begin{flalign}\label{equ:Negative Exponential 2}
 \mathop {\lim }\limits_{\lambda  \to 0} {1 \over \lambda }\left( {1 - {e^{ - \lambda }}} \right) = 1,\\
 \mathop {\lim }\limits_{\lambda  \to \infty} {1 \over \lambda }\left( {1 - {e^{ - \lambda }}} \right) = 0. \tag{\theequation a}\label{equ:Negative Exponential 2 a}
 \end{flalign}

\subsection{Gamma Distribution}
In many cases, the $\Gamma$ distribution can be used to describe the distribution of the amount of time one has to wait until a total of $n$ events has occurred in practice \cite{ross2014first}. For a random variable obeying $\Gamma$ distribution, its density is
\begin{equation}
\begin{split}
 X\sim f(x)=\left\{
   \begin{aligned}
 & {{\lambda {e^{ - \lambda x}}{{\left( {\lambda x} \right)}^{\alpha  - 1}}} \over {\Gamma \left( \alpha  \right)}}, \quad x \geq 0 \\
 & 0,\quad\quad\quad x<0 \\
   \end{aligned}
   \right.,
   \end{split}
\end{equation}
where $\lambda,\alpha>0$, we obtain
\begin{flalign}\label{equ:gamma 1}
  &\int_{-\infty}^{ + \infty } {{{\left( {f(x)} \right)}^{n + 1}}dx}  \nonumber \\
  =& \int_0^{ + \infty } {{{\left( {{{\lambda {e^{ - \lambda x}}{{\left( {\lambda x} \right)}^{\alpha  - 1}}} \over {\Gamma \left( \alpha  \right)}}} \right)}^{n + 1}}dx}  \\
  = &  {{{\lambda ^n}} \over {{{\left( {n + 1} \right)}^{\alpha n - n + \alpha }}{\Gamma ^{n + 1}}\left( \alpha  \right)}}\int_0^{ + \infty } {{e^{ - t}}{t^{\left( {\alpha  - 1} \right)\left( {n + 1} \right)}}dt}  \tag{\theequation a}\label{equ:gamma 1 a} \\
   = & {{{\lambda ^n}\Gamma \left( {\alpha n - n + \alpha } \right)} \over {{{\left( {n + 1} \right)}^{\alpha n - n + \alpha }}{\Gamma ^{n + 1}}\left( \alpha  \right)}} \tag{\theequation b}.\label{equ:gamma 1 b}
\end{flalign}
Substituting (\ref{equ:gamma 1 b}) in (\ref{equ:thm lf}), we obtian
\begin{equation}\label{equ:gamma 2}
l(X) = 1 + \sum\limits_{n = 1}^\infty  {{{{{\left( { - 1} \right)}^n}} \over {n!}}{{{\lambda ^n}\Gamma \left( {\alpha n - n + \alpha } \right)} \over {{{\left( {n + 1} \right)}^{\alpha n - n + \alpha }}{\Gamma ^{n + 1}}\left( \alpha  \right)}}}.
\end{equation}

\subsection{Beta Distribution}
The $\beta$ distribution often arises to depict a random variable whose set of possible values is some finite interval, such as $[0,1]$ \cite{ross2014first}. For a random variable follows $\beta$ distribution whose density is
\begin{equation}
\begin{split}
 X\sim f(x)=\left\{
   \begin{aligned}
 & {1 \over {B(a,b)}}{x^{a - 1}}{\left( {1 - x} \right)^{b - 1}}, \quad  0<x<1 \\
 & 0,\quad\quad\quad else \\
   \end{aligned}
   \right.,
   \end{split}
\end{equation}
where $B(a,b) = \int_0^1 {{x^{a - 1}}{{\left( {1 - x} \right)}^{b - 1}}dx} $ and $a,b>0$. According to \cite{ross2014first}, we have
\begin{equation}
B(a,b) = {{\Gamma \left( a \right)\Gamma \left( b \right)} \over {\Gamma \left( {a + b} \right)}}.
\end{equation}
In fact, we find
\begin{flalign}\label{equ:beta 1}
&\int_{ - \infty }^{ + \infty } {{{\left( {f(x)} \right)}^{n + 1}}dx} \nonumber\\
& = \int_0^1 {{{\left( {{1 \over {B(a,b)}}{x^{a - 1}}{{\left( {1 - x} \right)}^{b - 1}}} \right)}^{n + 1}}dx} \\
 &= {1 \over {{B^{n + 1}}(a,b)}}\int_0^1 {{x^{\left( {a - 1} \right)\left( {n + 1} \right)}}{{\left( {1 - x} \right)}^{\left( {b - 1} \right)\left( {n + 1} \right)}}dx}  \tag{\theequation a}\label{equ:beta 1 a}\\
 & = {{B(an - n + a,bn - n + b)} \over {{B^{n + 1}}(a,b)}} \nonumber \\
 &\quad \quad \quad \quad \quad \cdot \int_0^1 {{{{x^{\left( {a - 1} \right)\left( {n + 1} \right)}}{{\left( {1 - x} \right)}^{\left( {b - 1} \right)\left( {n + 1} \right)}}} \over {B(an - n + a,bn - n + b)}}dx}   \tag{\theequation b} \\
 & = {{B(an - n + a,bn - n + b)} \over {{B^{n + 1}}(a,b)}},\tag{\theequation c}\label{equ:beta 1 b}
\end{flalign}
Hence, we obtain
\begin{equation}\label{equ:beta 2}
 l(X) = 1 + \sum\limits_{n = 1}^\infty  {{{{{\left( { - 1} \right)}^n}} \over {n!}}{{B(an - n + a,bn - n + b)} \over {{B^{n + 1}}(a,b)}}}.
\end{equation}

\subsection{Laplace Distribution}
A random variable, whose density function is
\begin{equation}
  f(x)=\frac{\lambda}{2} e^{\lambda \left| x-\theta \right|},
\end{equation}
has a Laplace distribution where $\theta$ is a location parameter and $\lambda>0$.
In fact, we find
\begin{flalign}\label{equ:laplace 1}
{\int_{ - \infty }^{ + \infty } {{{\left( {{\lambda  \over 2}{e^{ - \lambda \left| {x - \theta } \right|}}} \right)}^{n + 1}}dx = {1 \over {n + 1}}\left( {{\lambda  \over 2}} \right)} ^n}.
\end{flalign}
Substituting (\ref{equ:laplace 1}) in (\ref{equ:thm lf}), we obtian
\begin{equation}\label{equ:laplace 2}
l(X) = 1 + \sum\limits_{n = 1}^\infty  {{1 \over {\left( {n + 1} \right)!}}} {\left( { - {\lambda  \over 2}} \right)^n} = {2 \over \lambda }\left( {1 - {e^{ - {\lambda  \over 2}}}} \right).
\end{equation}

For simplicity to follow, the DMIM for these common densities are summarized in Table \ref{tab:result1}.

\begin{table*}[tbp]
\centering
    \caption{Table of DMIM for common densities.}\label{tab:result1}
\begin{tabular}{c|c|c|c}
\toprule [1 pt]
Distribution & Parameter & Density & DMIM \\
\hline
Uniform &  $a$,$b$ & $f(x)=\frac{1}{b-a}, \, a\le x\le b$ & $e^{-\frac{1}{b-a}}$\\
\hline
Normal  &  $\mu$,$\sigma$ & $f(x)={1 \over {\sqrt {2\pi {\sigma ^2}} }}{e^{ - {{{{\left( {x - \mu } \right)}^2}} \over {2{\sigma ^2}}}}}$  & $1 + \sum\limits_{n = 1}^\infty  {{{{{\left( { - 1} \right)}^n}} \over {n!}}} {1 \over {\sqrt {n + 1} }}{\left( {{1 \over {\sqrt {2\pi {\sigma ^2}} }}} \right)^n}$\\
\hline
Exponential &$\lambda$ & $f(x)=\lambda e^{-\lambda x}, \quad x,\lambda> 0$  &$\frac{1}{\lambda}(1-e^{-\lambda})$\\
\hline
\multirow{2}{*}{Gamma}  & \multirow{2}{*}{$\alpha $,$\lambda$} & $f(x)={{\lambda {e^{ - \lambda x}}{{\left( {\lambda x} \right)}^{\alpha  - 1}}} \over {\Gamma \left( \alpha  \right)}} $  &\multirow{2}{*}{$1 + \sum\limits_{n = 1}^\infty  {{{{{\left( { - 1} \right)}^n}} \over {n!}}{{{\lambda ^n}\Gamma \left( {\alpha n - n + \alpha } \right)} \over {{{\left( {n + 1} \right)}^{\alpha n - n + \alpha }}{\Gamma ^{n + 1}}\left( \alpha  \right)}}}$} \\
& & $ x \geq 0, \lambda,\alpha>0$ & \\
\hline
\multirow{2}{*}{Beta} &\multirow{2}{*}{$a$,$b$}& $f(x)={1 \over {B(a,b)}}{x^{a - 1}}{\left( {1 - x} \right)^{b - 1}} $ & \multirow{2}{*}{$1 + \sum\limits_{n = 1}^\infty  {{{{{\left( { - 1} \right)}^n}} \over {n!}}{{B(an - n + a,bn - n + b)} \over {{B^{n + 1}}(a,b)}}} $}\\
& & $\quad  0<x<1, a,b>0$ & \\
\hline
\multirow{2}{*}{Laplace} & \multirow{2}{*}{$\lambda$,$\theta$}  & $f(x)=\frac{\lambda}{2} e^{\lambda \left| x-\theta \right|}$ &  \multirow{2}{*}{$\frac{2}{\lambda} \left( {1-e^{-\frac{\lambda}{2}}}\right)$} \\
& & $-\infty<x,\theta<\infty, \lambda>0$& \\


\toprule [1 pt]
\end{tabular}
\end{table*}

\section{Goodness of Fit with DMIM}\label{sec:five}
In this section, we will consider the problem of distribution goodness-of-fit in a continuous setting. Let $X_1,X_2,...X_n$ be a sequence of independent and identically distributed random variables, each having mean $\mu$ and variance $\sigma^2$. In practice, the real distribution is generally unknown and we usually use empirical distribution to substitute real distribution. Generally, the empirical distribution function is given by
\begin{equation}
 \hat F_n(x)=\frac{1}{n}\sum\limits_{k=1}^n {I_{(X_k\leq x)}},
\end{equation}
and the real distribution is $F(x)$ .

One practical problem that can occur with this strategy is that how many samples is required for fitting the real distribution with an acceptable bias in some degree. Many literatures studied this problem by Kolmogorov-Smirnov statistic \cite{massey1951kolmogorov,lilliefors1967kolmogorov,Resnick1992Advantures}. When $n$ is big enough, the confidence limits for a cumulative distribution are given by \cite{Resnick1992Advantures},
\begin{equation}
 P\{D_n>d\}\approx 2\sum\limits_{k=1}^{\infty} {{(-1)}^{k-1}e^{-2nk^2d^2}},
\end{equation}
where $D_n$ is error bound between empirical distribution and real distribution, called Kolmogorov-Smirnov statistic, which is defined as
\begin{equation}
D_n=\mathop {\sup}\limits_x {\left| \hat F_n (x)-F(x)\right|},
\end{equation}

Though this result can describe the goodness of fit very well and guide us to choose the sampling numbers, we need to give two artificial criterions, the deviation value $d$ and the probability $ P\{D_n>d\}$, in order to determine $n$. In addition, this method do not take the message importance of samples into account, which makes the process of information collection not intuitionistic.

In this paper, we consider this problem from the perspective of DMIM. Firstly, we define
\begin{equation}\label{equ:XNX define}
\gamma \left( n \right)  ={{l\left( {\sum\limits_{i = 1}^n {{X_i}} } \right)} / {l(X)}}  .
\end{equation}
as relative importance of these $n$ sample points. According to central-limit theorem \cite{ross2014first}, when $n$ is big enough, $\sum\nolimits_{i =1}^n {{X_i}} $ approximately obeys normal distribution $N(n\mu,n\sigma^2)$. In fact, when $\sqrt{n}\sigma$ is not too small (such a condition is satisfied because $n$ is big enough), $l\left( {\sum\nolimits_{i = 1}^n {{X_i}} } \right) \approx {e^{ - {1 \over {2\sqrt {\pi n} \sigma }}}}$ according to (\ref{equ:normal jieduan 2}). Hence
\begin{equation}\label{equ:XNX define 1}
\gamma(n)={{{e^{ - {1 \over {2\sqrt {\pi n} \sigma }}}}} \over {l\left( X \right)}}.
\end{equation}

We find $\gamma(n)$ increases rapidly firstly, and then increases slowly by analyzing its monotonicity. Moreover, we obtain
\begin{equation}\label{equ:XNX infty}
\gamma(\infty)=\mathop {\lim }\limits_{n \to \infty } \gamma \left( n \right) = \mathop {\lim }\limits_{n \to \infty } {{{e^{ - {1 \over {2\sqrt {\pi n} \sigma }}}}} \over {l\left( X \right)}} = {1 \over {l\left( X \right)}},
\end{equation}
which means $\gamma(n)$ reaches limit as $n \to \infty$. In fact, these two points are consistent with the characteristic of data fitting. Both $\gamma(n)$ and data fitting have the law of diminishing of marginal utility. Furthermore, the goodness of fit can not increase unboundedly and it reaches the upper bound when the number of sampling points approaches infinity. DMIM is bounded, while Shannon entropy and R{\'e}nyi entropy do not possess these characteristic. In conclusion, we adopt $\left| \gamma(\infty)-\gamma(n)\right|$ to describe the goodness of fit.


\begin{thm}\label{thm:low bound number}
  $X_1,X_2,X_3,\dots,X_n$ are the $n$ sampling of a continuous random variable $X$, whose density is $f(x)$. If $\left| \gamma(\infty)-\gamma(n)\right| \leq \varepsilon$, we will obtain
  \begin{equation}
   P\left\{D_n>\sqrt{2\pi\sigma^2 \ln{\frac{19}{9\beta}}} \ln{\frac{1}{1-\varepsilon}}\right\}\leq \beta.
  \end{equation}
\end{thm}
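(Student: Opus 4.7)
The plan is to convert the hypothesis on the DMIM ratio into a lower bound on the sample size $n$, and then feed this bound into the classical Kolmogorov--Smirnov tail inequality.

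First, I would combine the explicit forms (\ref{equ:XNX define 1}) and (\ref{equ:XNX infty}) to write
$$\gamma(\infty)-\gamma(n)=\frac{1-e^{-1/(2\sqrt{\pi n}\,\sigma)}}{l(X)},$$
which is non-negative since $\gamma$ increases in $n$, so the absolute value drops out. Using the universal bound $l(X)\le 1$ proved in Section \ref{sec:three}, the hypothesis $|\gamma(\infty)-\gamma(n)|\le\varepsilon$ then yields $1-e^{-1/(2\sqrt{\pi n}\,\sigma)} \le \varepsilon\, l(X) \le \varepsilon$. Rearranging to $e^{-1/(2\sqrt{\pi n}\,\sigma)}\ge 1-\varepsilon$ and taking logarithms produces the sample-size estimate
$$n \;\ge\; \frac{1}{4\pi\sigma^{2}\bigl(\ln\frac{1}{1-\varepsilon}\bigr)^{2}}.$$

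Next, I would invoke the Kolmogorov--Smirnov asymptotic expression. Because the series $2\sum_{k=1}^{\infty}(-1)^{k-1}e^{-2nk^{2}d^{2}}$ is alternating with strictly decreasing positive terms, it is majorized by its first term, giving the standard tail bound $P\{D_{n}>d\}\le 2e^{-2nd^{2}}$. Substituting $d=\sqrt{2\pi\sigma^{2}\ln(19/(9\beta))}\,\ln\frac{1}{1-\varepsilon}$ and inserting the lower bound on $n$ obtained above, I compute
$$2nd^{2}\;\ge\;4\pi\sigma^{2}\cdot\frac{1}{4\pi\sigma^{2}\bigl(\ln\frac{1}{1-\varepsilon}\bigr)^{2}}\cdot \ln\frac{19}{9\beta}\cdot\Bigl(\ln\tfrac{1}{1-\varepsilon}\Bigr)^{2}\;=\;\ln\frac{19}{9\beta},$$
so $e^{-2nd^{2}}\le 9\beta/19$ and the desired conclusion $P\{D_{n}>d\}\le 18\beta/19 \le \beta$ follows.

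I expect the principal obstacle to be pure bookkeeping: keeping straight the direction of each inequality as one passes from the DMIM ratio, through the lower bound on $n$, to the upper bound on the tail probability. The somewhat unusual constant $19/9$ is simply a conservative choice that guarantees the final strict margin $18/19<1$, so numerical constants must be tracked carefully at each substitution; any looser bound of the form $c/\beta$ with $c>2$ would work in its place.
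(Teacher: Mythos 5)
Your proposal is correct and follows the same skeleton as the paper's Appendix B — you derive the identical sample-size bound $n \ge \frac{1}{4\pi\sigma^{2}\ln^{2}(1-\varepsilon)}$ from $|\gamma(\infty)-\gamma(n)|\le\varepsilon$ together with $0\le l(X)\le 1$, plug in the same $d$, and reach the same pivot $2nd^{2}\ge\ln\frac{19}{9\beta}$, i.e.\ $e^{-2nd^{2}}\le\frac{9\beta}{19}$ — but you handle the Kolmogorov--Smirnov series by a genuinely different and cleaner device. You invoke the Leibniz alternating-series bound (first term dominates, since $e^{-2nk^{2}d^{2}}$ is strictly decreasing in $k$) to get $P\{D_{n}>d\}\le 2e^{-2nd^{2}}\le\frac{18\beta}{19}\le\beta$ in one line. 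The paper instead pairs consecutive terms, majorizes $(2m-1)^{2}$ linearly to sum a geometric series, and arrives at the weaker bound $\frac{2e^{-2nd^{2}}}{1-e^{-8nd^{2}}}$; closing the argument then requires verifying the quartic inequality $\beta t^{4}+2t-\beta\le 0$ for $t=e^{-2nd^{2}}\le\frac{9\beta}{19}$, which holds exactly when $\beta\le\frac{19}{9}\cdot 19^{-1/4}\approx 1.0112$. This explains one point where your closing commentary is slightly off: with your first-term bound, any constant $c\ge 2$ in place of $\frac{19}{9}$ indeed suffices, but in the paper's route the slack in $\frac{19}{9}>2$ is not mere conservatism — it is calibrated precisely so that the denominator correction $\frac{1}{1-t^{4}}$ still fits under $\beta$ for all $\beta\le 1$. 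Your version is both shorter and tighter (margin $\frac{18}{19}\beta$ versus the paper's near-equality), at no loss of generality; the one caveat, shared equally by both proofs, is that the starting identity $P\{D_{n}>d\}\approx 2\sum_{k\ge 1}(-1)^{k-1}e^{-2nk^{2}d^{2}}$ is an asymptotic (large-$n$) approximation that both arguments treat as exact.
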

\begin{proof}
Refer to the Appendix \ref{Appendices B}.
\end{proof}

\begin{rem}
According to (\ref{equ:NMIM_proof d}) in Appendix \ref{Appendices B}, we obtain
 \begin{flalign}\label{equ:NMIM_proof relation}
   \varepsilon  &= 1 - {e^{-d{{\left( {2\pi {\sigma ^2}\ln {{19} \over {9\beta }}} \right)}^{ - 1/2}}}},   \\
   \beta  &= {{19} \over 9}{e^{ - {{{d^2}} \over {2\pi {\sigma ^2}{{\ln }^2}(1 - \varepsilon )}}}}   \tag{\theequation a}\label{equ:NMIM_proof relation a}.
\end{flalign}
Therefore, there is a ternary relation among $d$, $\beta$ and $\varepsilon$. If two of them are known, the third one can be obtained, easily.
\end{rem}

\begin{rem}
For arbitrary positive number $d$ and $\beta \le1$, one can always find a $\varepsilon_0$, which can be obtained by (\ref{equ:NMIM_proof relation}), when $\varepsilon \le \varepsilon_0$, $P\left\{ {{D_n} > d} \right\} < \beta $ holds.
\end{rem}

\begin{rem}
When $\varepsilon$ tends zero, which means $n\to \infty$, at this time, $P\left\{D_n>0\right\}=0$. Therefore, the real distribution is equal to empirical distribution with probability $1$ as $\varepsilon \to 0$. That is,
\begin{equation}
\hat F_n(x) \to F(x) \quad as \quad \varepsilon \to 0.
\end{equation}
\end{rem}

Actually, the DMIM deviation characterizes the process of collection information in terms of data structure. With the growth of sampling number, the information gathers, and the empirical distribution approaches real distribution at the same time. In particular, when $n \to \infty$, all the information about the real distribution will be obtained. In this case, the empirical distribution is equal to real distribution, naturely.

\begin{rem}
For arbitrary continuous random variable with variance $\sigma^2$, if the maximal allowed DMIM deviation is $\varepsilon$, the sampling number should be bigger than $1/(4 \pi \sigma^2 \ln^2(1-\varepsilon) $ according to (\ref{equ:number_choose}).
\end{rem}
The sampling number only depends on one artificial criterion, the DMIM deviation, while the variance are the own attributes of the observed variable $X$. Furthermore, the sampling number in the new developed method has nothing to do with the distribution form, which means the new method is distribution-free.

\section{Numerical Results}\label{sec:numerical}
In this section, we present some numerical results to validate the above results in this paper.

\subsection{The DMIM of Normal Distribution}
 First of all, we analyze the DMIM in normal distribution by simulation. Its standard deviation $\sigma$ is varing from $0.01$ to $10$.

 Fig. \ref{fig:normal_big_sigma_2} depicts the DMIM versus standard deviation $\sigma$ in Normal distribution. We observe that there are some constraints on DMIM in this case. That is, the DMIM grows with the increasing of $\sigma$. Furthermore, it increases rapidly when $\sigma$ is small ($\sigma<1$), while it increases slowly when $\sigma$ is big ($\sigma>4$). Besides, DMIM is non-negative and it is very close to zero when $\sigma$ approaches zero. In order to avoid complex calculations, we give two approximate value of DMIM in Gauss distribution, which are $\tilde{l}_1(X)=1-1/(2\sqrt{\pi}\sigma)$ and $\tilde{l}_2(X)=e^{-1/(2\sqrt{\pi}\sigma)}$. Obviously, the gap between true value $l(X)$ and the approximate value $\tilde{l}_1(X)$ will be very small if $\sigma$ is big enough. However, $\tilde{l}_1(X)$ is smaller than $l(X)$ when $\sigma$ is small. For $\tilde{l}_2(x)$, the gap between it and $l(X)$ will be very small if $\sigma$ is not too small. In fact, there is only a slight deviation between them when $\sigma$ is small.

\begin{figure}
  \centerline{\includegraphics[width=8.0cm]{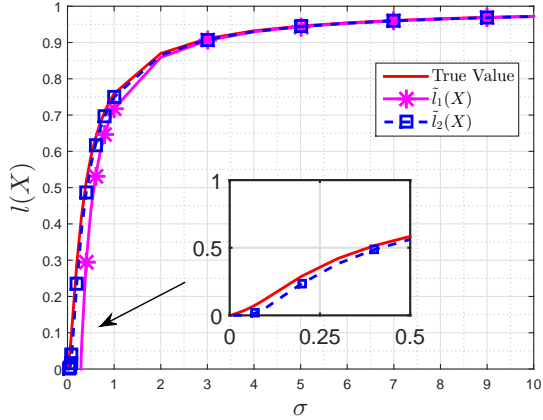}}
  \caption{$l(X)$ vs. $\sigma$ in normal distribution.}\label{fig:normal_big_sigma_2}
\end{figure}

 Moreover, Fig. \ref{fig:normal_big_sigma_err2} shows the absolute and relative error when we adopt approximations. Some observations are obtained. Both absolute and relative error are relatively small when $\sigma$ is big ($\left|l(X)-\tilde{l}_1(X)\right|/l(X)<1\%$ when $\sigma>2.5$, and $\left|l(X)-\tilde{l}_2(X)\right|/l(X)<1\%$ when $\sigma>1.25$), and they both decreases with increasing of $\sigma$ for two approximate values in most of time. In fact, when $\sigma$ is not too small ($\sigma>0.3$), the relative error of $\tilde{l}_2(X$ is smaller than $10\%$. When $\sigma<6.25$, the relative error of $\tilde{l}_2(X)$ is smaller than that of $\tilde{l}_2(X)$ and the opposite is true when $\sigma>6.25$. In summary, $\tilde{l}_2(X)$ is a good approximation for all the $\sigma$ and $\tilde{l}_1(X)$ is an excellent approximation when $\sigma$ is big enough.

\begin{figure}
  \centerline{\includegraphics[width=9.5cm]{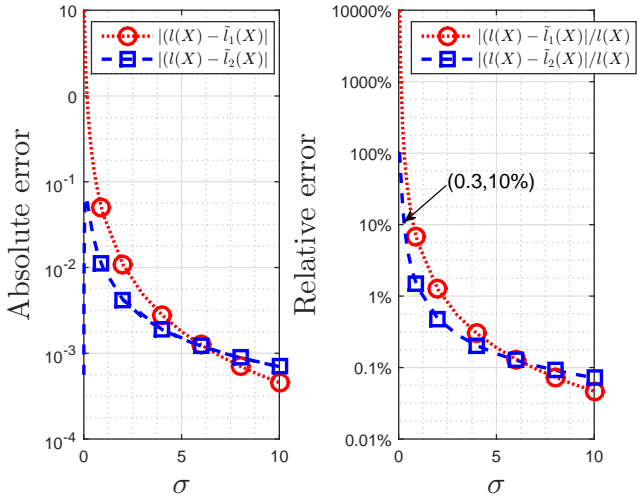}}
  \caption{Error vs. $\sigma$ in normal distribution.}\label{fig:normal_big_sigma_err2}
\end{figure}

Fig. \ref{fig:normal_small_sigma_1} shows the truncation error $|l(X)-\hat{l}(X)|$ versus the number of series $N$ when $\sigma$ is small. Without loss of generality, we take $\sigma$ as $0.02$, $0.03$ and $0.05$. $N$ is varing from $0$ to $100$. Some interesting observations are made. The truncation error will remain unchanged and approach zero only if $N>N_0$, such as $n>70$ when $\sigma=0.02$. When $N<N_0$, it increases at first and then decreases. It also can be seen that $N_0$ decreases with the increasing of $\sigma$. Furthermore, for the same $N$, the DMIM decreases with the increasing of $\sigma$. Furthermore, $\hat{l}(x)$, which is given by (\ref{equ:DMIM_approx-hat}), is a good approximation because $|l(X)-\hat{l}(X)|<0.01<3\sigma/e$ when $N=n_0$.

\begin{figure}
  \centerline{\includegraphics[width=8.0cm]{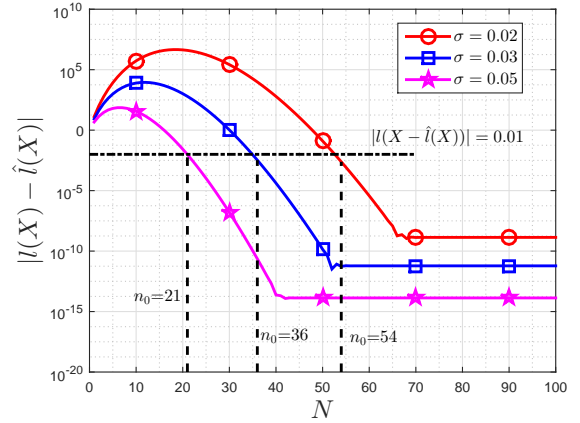}}
  \caption{$|l(x)-\hat{l}(x)|$ vs. $N$.}\label{fig:normal_small_sigma_1}
\end{figure}

\subsection{The DMIM for Common Densities}
Fig. \ref{fig:smae_var} shows the DMIM of uniform distribution, normal distribution, exponential distribution, Gamma distribution and Laplace distribution when the variance increases from $0.1$ to $100$. The simulation parameter in $\Gamma$ distribution is set as $\alpha=0.5,1.5$. It is observed that the DMIM increases with the increasing of variance for all these distributions. Among them, the DMIM of normal distribution is the largest and that of Gamma distribution ($\alpha=0.5$) is the smallest. Fig. \ref{fig:smae_var} also shows that the DMIM of Gamma distribution increases with increasing of $\alpha$ for the same variance. It also can be seen from the figure, that the gap between the DMIM of uniform distribution and that of normal distribution is negligibly small when variance is big enough. This is because that, for the same variance $\sigma^2$, these two DMIM respectively are $e^{-1/(2\sqrt{3}\sigma)}$ and $e^{-1/(2\sqrt{\pi}\sigma)}$ (approximate value when $\sigma$ is large according to (\ref{equ:normal jieduan 2})), which are very close.

\begin{figure}
  \centerline{\includegraphics[width=8.0cm]{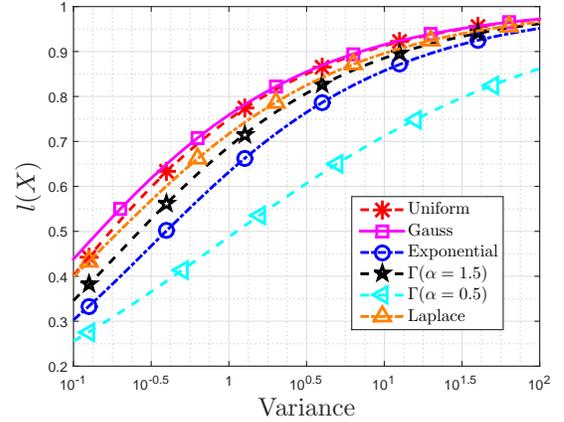}}
  \caption{$l(X)$ vs. Variance.}\label{fig:smae_var}
\end{figure}

\subsection{Goodness-of-fit with DMIM}
Next we focus on conducting Monte Carlo simulation by computer to validate our results about goodness of fit. The samples are drawn by independent identically distributed Gaussian, each having mean zero. Their standard deviation is $1$ or $2$. The DMIM deviation $\varepsilon$ is varying from $0.001$ to $0.1$. The confidence limit $\beta$ is $0.001$. For each value of $\varepsilon$, the simulation is repeated $10000$ times.

Fig. \ref{fig:Number_normal} shows the relationship between the probability of error bound $P\{D>d\}$ and DMIM deviation $\varepsilon$. Some observations can be obtained. The probability of error bound decreases with the decreasing of DMIM deviation. In fact, this process can be divided into three phases. In phase one, in which $\varepsilon$ is very small ($\varepsilon<10^{-2.8}$ when $d=0.01$ and $\sigma=1$), $P\{D>d\}$ is close to zero. In phase two, $\varepsilon$ is neither too small nor too large ($10^{-2.8}<\varepsilon<10^{-2}$ when $d=0.01$ and $\sigma=1$). In this case, $P\{D>d\}$ increases rapidly from zero to one. In the phase three, in which $\varepsilon$ is large ($\varepsilon>10^{-2}$ when $d=0.01$ and $\sigma=1$), $P\{D>d\}$ approaches one. For the same standard deviation, $P\{D>d\}$ decreases with increasing of $d$ when $P\{D>d\}<1$. Furthermore, for the same $d$, the probability of error bound increases with increasing of the standard deviation.

\begin{figure}
  \centerline{\includegraphics[width=8.0cm]{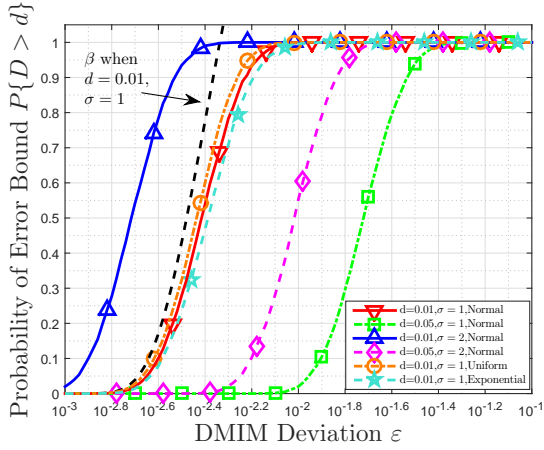}}
  \caption{Probability of error bound $P\{D>d\}$ vs. DMIM deviation $\varepsilon$.}\label{fig:Number_normal}
\end{figure}

The simulation results of $P\{D>0.01\}$ are listed in Table. \ref{tab:result2}, where the sampling number $n$ is given by (\ref{equ:number_choose}) and the upper bound for the error probability $\beta$ is given by (\ref{equ:NMIM_proof relation a}). In this table, we take $d=0.01$ as the criterion to evaluate the error between the empirical distribution and real distribution. To better validate our results, normal distribution, exponential distribution, uniform distribution and Laplace distribution are listed here. The standard deviation of these four distribution is $\sigma$. As a result, $\lambda=1/\sigma$ in exponential distribution, and the density of uniform distribution is $1/(2\sqrt{3}\sigma)$. The $\lambda=\sqrt{2}/\sigma$ in Laplace distribution. The remaining parameter values are same with that in Fig. \ref{fig:Number_normal}. We obtain that $\beta$ is indeed the upper bound of $P\{D>0.01\}$ because every $P\{D>0.01\}$ is smaller than $\beta$. For each distribution, it is noted that the sampling number increases with the decreasing of DMIM deviation. In addition, $P\{D>0.01\}$ decreases with decreasing of the DMIM deviation. $P\{D>0.01\}$ can even be zero when $\varepsilon=0.001$ and $\sigma=1$. For the same DMIM deviation, $\beta$ and $P\{D>0.01\}$ increase with increasing of $\sigma$. Therefore, if one wants to have the same precision in different variance, it needs to select smaller $\varepsilon$ when $\sigma$ is larger, such as $\varepsilon=0.002$ when $\sigma=1$ and $\varepsilon=0.001$ when $\sigma=2$. Furthermore, when $n$ is not too small, for the same $\varepsilon$ and $\sigma$, $P\{D>0.01\}$ of these four distribution is very close to each other, which means this method is distribution-free.




\begin{table*}[tbp]
\centering
    \caption{Table of probability of error bound $P\{D>0.01\}$. The sampling number $n$ is given by (\ref{equ:number_choose}) and the upper bound for the error probability $\beta$ is given by (\ref{equ:NMIM_proof relation a}).}\label{tab:result2}
\begin{tabular}{cc|c|ccc|ccc|c}
\toprule [1 pt]
&Distribution & DMIM deviation $\varepsilon$ & \multicolumn{3}{|c|}{$\sigma=1$} & \multicolumn{3}{|c|}{$\sigma=2$}& \\
\cline{4-9}
& & & $n$ & $\beta$ &  $P\{D>0.01\}$ &$n$ & $\beta$ &  $P\{D>0.01\}$&\\
\hline
&\multirow{4}{*}{Normal} & 0.01& 787 &1.8034 &0.9994&196 &2.0296 &1&\\
&& 0.003 & 8815 &0.3621 &0.2286 &2203 &01.3586 &0.9056&\\
&& 0.002 & 19854 &0.0398 &0.0207 &4963 &0.7823 &0.5390&\\
&& 0.001 & 79497 &2.63e-7 &0 &19874 &0.0396 &0.0221&\\
\hline
&\multirow{4}{*}{Exponent} & 0.01 & 787 &1.8034 &0.9964&196 &2.0296 &1&\\
&& 0.003 & 8815 &0.3621 &0.2011 &2203 &1.3586 &0.8609&\\
&& 0.002 & 19854 &0.0398 &0.0164 &4963 &0.7823 &0.4821&\\
&& 0.001 & 79497 &2.63e-7 &0 &19874 &0.0396 &0.0161&\\
\hline
&\multirow{4}{*}{Uniform} & 0.01 & 787 &1.8034 &0.9996&196 &2.0296 &1&\\
&& 0.003 & 8815 &0.3621 &0.2791 &2203 &1.3586 &0.9447&\\
&& 0.002 & 19854 &0.0398 &0.03 &4963 &0.7823 &0.6043&\\
&& 0.001 & 79497 &2.63e-7 &0 &19874 &0.0396 &0.0275&\\
\hline
&\multirow{4}{*}{Laplace} & 0.01 & 787 &1.8034 &0.9952&196 &2.0296 &1&\\
&& 0.003 & 8815 &0.3621 &0.1835 &2203 &1.3586 &0.8466&\\
&& 0.002 & 19854 &0.0398 &0.0125 &4963 &0.7823 &0.4509&\\
&& 0.001 & 79497 &2.63e-7 &0 &19874 &0.0396 &0.0152&\\


\toprule [1 pt]

\end{tabular}
\end{table*}

To demonstrate the effectiveness of our theoretical results, we illustrate our proposed sampling number to fit a common and complex distribution, the Nakagami distribution. Nakagami-$m$ distribution provides good fitting to empirical multipath fading channel \cite{Nakagami}. The parameter $m$ in this part is $2$ and $\Omega=10$. Fig. \ref{fig:Number_normal1} shows the cumulative distribution function (CDF) of empirical distribution and real distribution. The simulated DMIM deviation $\varepsilon$ is $0.1$, $0.05$, $0.01$ and $0.001$. It is noted that the gap between the CDF of empirical distribution and that of real distribution is constrained by the DMIM deviation. Obviously, the gap decreases with the decreasing of the DMIM deviation. Particularly, the gap almost disappears when $\varepsilon=0.001$. In general, there is a tradeoff between the sampling number and the accuracy for empirical distribution, but DMIM can provide a new viewpoint on this by taking message importance into account.

\begin{figure}
  \centerline{\includegraphics[width=8.0cm]{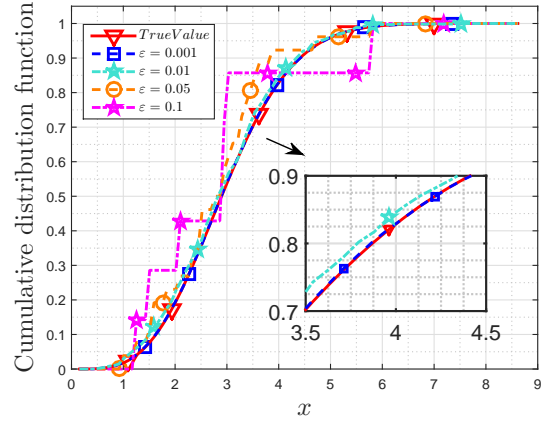}}
  \caption{The fitting of the cumulative distribution function of Nakagami distribution when $m=2$ and $\Omega=10$.}\label{fig:Number_normal1}
\end{figure}

\section{Conclusion}\label{sec:seven}
This paper focused on the problem, that how many samples is required in big data collection, with taking DMIM into account. Firstly, we defined DMIM as an measure of message importance for continuous random variable to help us describe the information flows during sampling. It is an extension of MIM and similar to differential entropy. Then, the DMIM for some common distributions, such as normal and uniform distribution, were discussed. Moreover, we made the asymptotic analysis of Gaussian distribution. As a result, high-precision approximate values for DMIM of normal distribution were respectively given when variance is extremely big or relatively small.

Then we proved that the divergence between the empirical distribution and the real distribution is controlled by the DMIM deviation, which shows the deviation of DMIM is equivalent to Kolmogorov-Smirnov statistic. In fact, compared with Kolmogorov-Smirnov test, the new method based on DMIM gives us another viewpoint of information collection because it visually shows the information flow with the increasing of sampling points, which helps us to design sampling strategy for the actual system of big data. Moreover, similar to Kolmogorov-Smirnov test, the sampling number in our method is distribution-free, which only depends on the DMIM deviation when the random variable is given.

Proposing the joint differential message importance measure and using it to design high-efficiency big data analytic system are of our future interests.

\appendices
\section{Proof of Theorem \ref{thm:normal small sigma}}\label{Appendices A}
\begin{proof}
For convenience, we might as well take
\begin{equation}
T(N) = {1 \over {\sqrt {2\pi n\left( {n + 1} \right)} }}{\left( {{e \over {\sqrt {2\pi } \sigma n}}} \right)^n},
\end{equation}
and let $n_0'=\left\lceil {{e \over {\sqrt {2\pi } \sigma }}} \right\rceil =c/\sigma+c'=n_0+1$ where $\left\lceil x\right\rceil$ is the smallest integer larger than or equal to $x$ and $c=e/\sqrt{2\pi}$. Obviously, $0\le c' \le 1$.

Hence
\begin{flalign}\label{equ:Normal small 00}
   \left| {l(X) - \hat{l}( X)} \right| &= \left| {\sum\limits_{n = {n_0}^\prime }^\infty  {{{{{\left( { - 1} \right)}^n}} \over {\sqrt {2\pi n\left( {n + 1} \right)} }}{{\left( {{e \over {\sqrt {2\pi } \sigma n}}} \right)}^n}} } \right|  \\
    &= \left| {\sum\limits_{n = {n_0}^\prime }^\infty  {{{\left( { - 1} \right)}^n}T(n)} } \right|  \tag{\theequation a} \label{equ:Normal small 00 a}\\
    &\le \sum\limits_{n = {n_0}^\prime }^\infty  {\left| {T(n)} \right|} .  \tag{\theequation b}\label{equ:Normal small 00 b}
\end{flalign}
This means, we only need to check $\sum\limits_{n = {n_0}^\prime }^\infty  {\left| {T(n)} \right|} <\frac{3\sigma}{e}$ holds.

Then we find
\begin{flalign}\label{equ:Normal small n0}
T(n_0')=&{1 \over {\sqrt {2\pi n_0'\left( {n_0'+ 1} \right)} }}{\left( {  {e \over {\sqrt {2\pi } \sigma n_0'}}} \right)^{n_0'}} \\
 =& {1 \over {\sqrt {2\pi \left( {c/\sigma  + c'} \right)\left( {c/\sigma  + c' + 1} \right)} }}{\left( {  {c \over {\sigma \left( {c/\sigma  + c'} \right)}}} \right)^{c/\sigma  + c'}}  \nonumber \\
  = & {1 \over {\sqrt {2\pi \left( {c/\sigma  + c'} \right)\left( {c/\sigma  + c' + 1} \right)} }}\nonumber\\
 & \quad\quad \cdot{\left( {{{\left( {1 + {{c'\sigma } \over c}} \right)}^{{c \over {c'\sigma }}}}} \right)^{ - c'}}{\left( {1 + {{c'\sigma } \over c}} \right)^{ - c'}} . \tag{\theequation a}\label{equ:Normal small n0 a}
\end{flalign}

When $N>n_0'$, we obtain
\begin{flalign}\label{equ:Normal small 01}
   T(N) &= {1 \over {\sqrt {2\pi N\left( {N + 1} \right)} }}{\left( {{e \over {\sqrt {2\pi } \sigma N}}} \right)^N}  \\
  &  < {1 \over {\sqrt {2\pi {{n'}_0}\left( {{{n}_0'} + 1} \right)} }}{\left( {{e \over {\sqrt {2\pi } \sigma N}}} \right)^N} \tag{\theequation a}\label{equ:Normal small 01 a} \\
  &  = {1 \over {\sqrt {2\pi {{n}_0'}\left( {{{n}_0'} + 1} \right)} }}{\left( {{e \over {\sqrt {2\pi } \sigma N}}} \right)^{{{n}_0'}}}{\left( {{e \over {\sqrt {2\pi } \sigma N}}} \right)^{N - {{n}_0'}}} \nonumber \\
  &  = {1 \over {\sqrt {2\pi {{n'}_0}\left( {{n_0'} + 1} \right)} }}{\left( {{e \over {\sqrt {2\pi } \sigma {n_0'}}}{{{n_0'}} \over N}} \right)^{{{n}_0'}}}{\left( {{e \over {\sqrt {2\pi } \sigma N}}} \right)^{N - {{n}_0'}}} \nonumber\\
  &  = {1 \over {\sqrt {2\pi {{n}_0'}\left( {{{n}_0'} + 1} \right)} }}{\left( {{e \over {\sqrt {2\pi } \sigma {{n}_0'}}}} \right)^{{{n'}_0}}}\nonumber \\
  &\quad \quad \quad \cdot {\left( {1 + {{N - {{n'}_0}} \over {{n_0'}}}} \right)^{ - {n_0'}}}{\left( {{e \over {\sqrt {2\pi } \sigma N}}} \right)^{N - {{n}_0'}}}  \tag{\theequation b}\label{equ:Normal small 01 b}\\
  &  < T({n_0'}){\left( {{e \over {\sqrt {2\pi } \sigma N}}} \right)^{N - {n_0'}}}  \tag{\theequation c}\label{equ:Normal small 01 c}\\
  &  \le \left\{
  \begin{aligned}
    & T({n_0'}){e \over {\sqrt {2\pi } \sigma }} \frac{1}{N}, \,\,\,\,\quad  N-n_0'=1 \\
   & T({n_0'}){\left({e \over {\sqrt {2\pi } \sigma}}\right)}^2 \frac{1}{ \left( {N - 1} \right)N},\,N-n_0'\ge 2 \\
  \end{aligned}
   \right.  ,\tag{\theequation d}\label{equ:Normal small 01 d}
\end{flalign}
where (\ref{equ:Normal small 01 a}) follows from ${1 \over {\sqrt {2\pi N\left( {N + 1} \right)} }} < {1 \over {\sqrt {2\pi n_0'\left( {n_0' + 1} \right)} }} $ because $N>n_0'$. (\ref{equ:Normal small 01 c}) is obtained by removing ${\left( {1 + {{N - {{n'}_0}} \over {{n_0'}}}} \right)^{ - {n_0'}}}$. It requires that $0<{\left( {1 + {{N - {{n'}_0}} \over {{n_0'}}}} \right)^{ - {n_0'}}}<1$. Such a condition is satisfied because $N>n_0'>0$. It is obtained that ${1 \over {{N^{N - {n_0}^\prime }}}} \le {1 \over {\left( {N - 1} \right)N}}$ when $N-n_0'\ge2$. Therefore (\ref{equ:Normal small 01 d}) holds.

Substituting (\ref{equ:Normal small 01 d}) in (\ref{equ:Normal small 00 b}), we have (\ref{equ:Normal small 02})-(\ref{equ:Normal small approx 1}) (See the nest page).
\begin{figure*}
\begin{flalign}\label{equ:Normal small 02}
  \sum\limits_{n = {n_0}^\prime }^\infty  {\left| {T(n)} \right|}&  < T({n_0}')\left( {1 + {e \over {\sqrt {2\pi } \sigma }}{1 \over {{n_0}^\prime  + 1}} + {e \over {\sqrt {2\pi } \sigma }}{1 \over {\left( {{n_0}^\prime  + 1} \right)\left( {{n_0}^\prime  + 2} \right)}} + {e \over {\sqrt {2\pi } \sigma }}{1 \over {\left( {{n_0}^\prime  + 2} \right)\left( {{n_0}^\prime  + 3} \right)}}+...} \right)  \\
  &  = T({n_0}')\left( {1 + {e \over {\sqrt {2\pi } \sigma }}{1 \over {{n_0}^\prime  + 1}} + {e \over {\sqrt {2\pi } \sigma }}{1 \over {{n_0}^\prime  + 1}} - {e \over {\sqrt {2\pi } \sigma }}{1 \over {{n_0}^\prime  + 2}} + {e \over {\sqrt {2\pi } \sigma }}{1 \over {{n_0}^\prime  + 2}}+...} \right) \tag{\theequation a} \label{equ:Normal small 02 a} \\
  &  = T({n_0}')\left( {1 + {{2c} \over \sigma }{1 \over {{n_0}^\prime  + 1}}} \right)  \tag{\theequation b}\label{equ:Normal small 02 b} \\
  &  = {1 \over {\sqrt {2\pi \left( {c/\sigma  + c'} \right)\left( {c/\sigma  + c' + 1} \right)} }}{\left( {{{\left( {1 + {{c'\sigma } \over c}} \right)}^{{c \over {c'\sigma }}}}} \right)^{ - c'}}{\left( {1 + {{c'\sigma } \over c}} \right)^{ - c'}}\left( {1 + {{2c} \over {c + c'\sigma  + \sigma }}} \right) . \tag{\theequation c}\label{equ:Normal small 02 c}
\end{flalign}
(\ref{equ:Normal small 02 c}) is obtained by substituting (\ref{equ:Normal small n0 a}) in (\ref{equ:Normal small 02 b}).
In fact, we find
\begin{equation}\label{equ:Normal small approx 1}
\mathop {\lim }\limits_{\sigma  \to 0} {{{1 \over {\sqrt {2\pi \left( {c/\sigma  + c'} \right)\left( {c/\sigma  + c' + 1} \right)} }}{{\left( {{{\left( {1 + {{c'\sigma } \over c}} \right)}^{{c \over {c'\sigma }}}}} \right)}^{ - c'}}{{\left( {1 + {{c'\sigma } \over c}} \right)}^{ - c'}}\left( {1 + {{2c} \over {c + c'\sigma  + \sigma }}} \right)} \over {3\sigma {e^{ - c' - 1}}}} = 1.
\end{equation}
Therefore, when $\sigma$ is relatively small, $3\sigma e^{-c'-1}$ is a good approximate value for (\ref{equ:Normal small 02 c}).
\end{figure*}
Based on the discussions above, when $\sigma$ is relatively small, we have
\begin{equation}\label{equ:Normal small result 1}
\sum\limits_{n = {n_0}^\prime }^\infty  {\left| {T(n)} \right|} <{{3\sigma } \over e}{e^{ - c'}}.
\end{equation}
In fact $0\le c'\le1$, so we obtain
\begin{equation}\label{equ:Normal small result 2}
\sum\limits_{n = {n_0}^\prime }^\infty  {\left| {T(n)} \right|} <{{3\sigma } \over e}.
\end{equation}
Hence,
\begin{equation}\label{equ:Normal small result 3}
\left| {l(x) - l(\hat x)} \right| < {{3\sigma } \over e}.
\end{equation}
The proof is completed.
\end{proof}

\section{Proof of Theorem \ref{thm:low bound number}}\label{Appendices B}
\begin{proof}
In fact, a upper bound of $ P\{D_n>d\}$ is given by
\begin{flalign}\label{equ: up_bound sum}
  & P\left\{ {{D_n} > d} \right\} \approx 2\sum\limits_{k = 1}^\infty  {{{\left( { - 1} \right)}^{k - 1}}{e^{ - 2n{k^2}{d^2}}}}   \\
  &  = 2\sum\limits_{m = 1}^\infty  {\left( {{e^{ - 2n{{\left( {2m - 1} \right)}^2}{d^2}}} - {e^{ - 2n{{(2m - 1 + 1)}^2}{d^2}}}} \right)}   \tag{\theequation a}\label{equ: up_bound sum a}\\
  &  = 2\sum\limits_{m = 1}^\infty  {\left( {{e^{ - 2n{{\left( {2m - 1} \right)}^2}{d^2}}}\left( {1 - {e^{ - 2n\left( {4m - 1} \right){d^2}}}} \right)} \right)}   \tag{\theequation b}\label{equ: up_bound sum b}\\
  &  \le 2\sum\limits_{m = 1}^\infty  {{e^{ - 2n{{\left( {2m - 1} \right)}^2}{d^2}}}}   \tag{\theequation c}\label{equ: up_bound sum c}\\
  &  \le 2\sum\limits_{m = 1}^\infty  {{e^{ - 4n{d^2}\left( {2m - 1} \right) + 2n{d^2}}}}   \tag{\theequation d}\label{equ: up_bound sum d}\\
  &  = 2\sum\limits_{m = 1}^\infty  {{e^{ - 8n{d^2}m + 6n{d^2}}}}    \tag{\theequation e}\label{equ: up_bound sum e}\\
  &  = {{2{e^{ - 2n{d^2}}}} \over {1 - {e^{ - 8n{d^2}}}}}  .\tag{\theequation f}\label{equ: up_bound sum f}
\end{flalign}
(\ref{equ: up_bound sum c}) is obtained for the fact that $1 - {e^{2n\left( {4m - 1} \right){d^2}}} \leq 1$. (\ref{equ: up_bound sum d}) requires $ - 2n{\left( {2m - 1} \right)^2}{d^2} \le  - 4n{d^2}\left( {2m - 1} \right) + 2n{d^2}$. Such a condition is satisfied because $ - 2n{d^2}{\left( {2m - 1 - 1} \right)^2} \le 0$.

This means, we only need to check ${{{e^{ - 2n{d^2}}}} \over {1 - {e^{ - 8n{d^2}}}}} \le \beta $ holds. 

Substituting (\ref{equ:XNX define 1}) and (\ref{equ:XNX infty}) in $\left| \gamma(\infty)-\gamma(n)\right| \leq \varepsilon$, we get
\begin{equation}\label{equ:NMIM_number_1}
   \left| {{1 \over {l\left( X \right)}} - {{{e^{ - {1 \over {2\sqrt {\pi n} \sigma }}}}} \over {l\left( X \right)}}} \right| \le \varepsilon \Rightarrow n \ge {1 \over {4\pi {\sigma ^2}{{\ln }^2}\left( {1 - \varepsilon l\left( X \right)} \right)}}.
\end{equation}

Because $0\leq l(X) \leq 1$, we obtain
\begin{equation}\label{equ:number_choose}
n \ge {1 \over {4\pi {\sigma ^2}{{\ln }^2}\left( {1 - \varepsilon l\left( X \right)} \right)}} \ge {1 \over {4\pi {\sigma ^2}{{\ln }^2}\left( {1 - \varepsilon } \right)}}.
\end{equation}
Letting
\begin{equation}\label{equ:NMIM_proof d}
d=\sqrt{2\pi\sigma^2 \ln{\frac{19}{9\beta}}} \ln{\frac{1}{1-\varepsilon}},
\end{equation}
we have
 \begin{flalign}\label{equ:NMIM_proof 1}
   2n{d^2} &\ge 2{{2\pi {\sigma ^2}\ln {{19} \over {9\beta }}{{\ln }^2}(1 - \varepsilon )} \over {4\pi {\sigma ^2}{{\ln }^2}(1 - \varepsilon )}}   \Rightarrow  {e^{ - 2n{d^2}}} \le {{9\beta } \over {19}}.
\end{flalign}
It is easy to check
\begin{equation}\label{equ:NMIM_proof 2}
\beta {\left( {{e^{ - 2n{d^2}}}} \right)^4} +2 {e^{ - 2n{d^2}}} - \beta  \le 0,
\end{equation}
when $\beta \le {{19} \over 9}\root 4 \of {{1 \over {19}}} \approx1.0112$. In fact, $\beta$ is a threshold value of the probability, so we usually take $\beta \le 1$. Therefore, (\ref{equ:NMIM_proof 2}) holds all the time.

Hence,
\begin{equation}
{2{{e^{ - 2n{d^2}}}} \over {1 - {e^{ - 8n{d^2}}}}} \le \beta .
\end{equation}
Based on the discussions above, we get
  \begin{equation}
   P\left\{D_n>\sqrt{2\pi\sigma^2 \ln{\frac{19}{9\beta}}} \ln{\frac{1}{1-\varepsilon}}\right\}< \beta.
  \end{equation}
\end{proof}

\end{document}